\documentclass[10pt]{article}       
%
%
\usepackage{graphicx}
%
%

\usepackage{color,soul}
\usepackage{latexsym,amssymb,lastpage}
\usepackage{graphicx,amsfonts}
\usepackage{times,mathptmx,bm,amsmath,amsthm}
\usepackage{dcolumn}
\usepackage{algpseudocode}

%
%

\newcommand{\sM}{\mathcal{M}}
\newcommand{\sG}{\mathcal{G}}
\newcommand{\sV}{\mathcal{V}}
\newcommand{\sE}{\mathcal{E}}

\newcommand{\sB}{\mathcal{B}}
\newcommand{\hchi}{\widetilde{\chi}}
\newcommand{\tc}{\tilde{c}}

\newtheorem{thm}{Theorem}
\newtheorem{prop}[thm]{Proposition}
\newtheorem{lem}[thm]{Lemma}
\newtheorem{cor}[thm]{Corollary}

%
\begin{document}

\title{An $O(n \log n)$ time Algorithm for computing the Path-length Distance between Trees}

\author{David Bryant
\thanks{Department of Mathematics and Statistics, 
              University of Otago, Dunedin, New Zealand. {\tt david.bryant@otago.ac.nz}. {\tt https://orcid.org/0000-0003-1963-5535}}              
        \and
        Celine Scornavacca
        \thanks{ISEM, CNRS, Universit\'{e} de Montpellier, IRD, EPHE, 
	Montpellier, France {\tt  celine.scornavacca@umontpellier.fr}}
}


\maketitle


\begin{abstract}
Tree comparison metrics have proven to be an invaluable aide in the reconstruction and analysis of phylogenetic (evolutionary) trees. The path-length distance between trees is a particularly attractive measure as it reflects differences in tree shape as well as differences between branch lengths. The distance equals the sum, over all pairs of taxa, of the squared differences between the lengths of the unique path connecting them in each tree. We describe an $O(n \log n)$ time for computing this distance, making extensive use of tree decomposition techniques introduced by Brodal et al. \cite{Brodal04}.\\
{\bf keywords} Phylogeny; Tree comparison metrics;  Path-length metric; Tree decomposition.\\
{\bf Mathematics subject classification (2010)} 68Q25 $\cdot$ 92D15 $\cdot$ 05CO5
\end{abstract}

\section{Introduction}
A \emph{phylogenetic tree} is a tree describing the evolution of a set of entities $X$ (species, genes etc.), which will be called \emph{taxa}  from now onwards.  Degree-one nodes are called \emph{leaves} and a bijective function associates each taxon to a leaf. Internal nodes represent putative ancestral taxa and branch lengths quantify the evolutionary distances between nodes.

Tree comparison metrics provide a  quantitative measure of the similarity or difference between two phylogenetic trees.  They have proven invaluable for statistical testing (e.g. \cite{Penny93,Holmes05,Susko11}), for visualisation \cite{Hillis05}, and for the construction of consensus trees \cite{Swofford91,Bryant03,Lapointe97}. By far the most well-known tree comparison metric is the Robinson-Foulds metric \cite{Robinson81}, which equals the number of bipartitions\footnote{A bipartition $A|B$ with $A\cup B=X$ is in a phylogenetic tree $T=(V,E)$ if there exists an edge $e\in E$ such that its removal creates two trees with taxon sets $A$ and $B$.}that are in one tree and not the other. However many other different metrics have also been proposed, each one based on a different characteristic of the trees being compared. 

Here we consider pairs of trees  on the same set of taxa. Also, our trees are \emph{binary}, i.e. each internal node has degree three.
The {\em path-length} between two taxa in a phylogenetic tree is the sum of the branch lengths along the unique path between them.  
The {\em path-length distance} between two trees $T_1$ and $T_2$  is given by
\begin{equation}
\Delta(T_1,T_2) = \sum_{ij} (p_{ij} - q_{ij})^2, \label{eq:Delta}
\end{equation}
where $p_{ij}$ is the path length between taxa $i$ and $j$ in the first tree and $q_{ij}$ is the path length in the second tree. We note that $\sqrt{\Delta(T_1,T_2)}$ is a metric in the mathematical sense. The first explicit description of the metric appears in \cite{Penny93} (without branch lengths) and  \cite{Lapointe97} (with branch lengths), though closely related ideas appear much earlier (e.g.  \cite{Hartigan67,Farris69,Williams71}). 

Given a phylogeny with $n$ leaves, it takes $O(n^2)$ time to construct the set of all path-lengths $p_{12},p_{13},\ldots,p_{(n-1)n}$, using the dynamic programming algorithm presented in \cite{Bryant97}. Hence the path-length distance can be easily computed in $O(n^2)$ time. Our main contribution in this paper is to show that we can compute this distance in $O(n \log n)$ time, which is almost, but not quite, linear in the size of the problem input.

Expanding \eqref{eq:Delta} gives 
\begin{equation}
\Delta(T_1,T_2) = \sum_{ij} (p_{ij})^2 + \sum_{ij} (q_{ij})^2 - 2 \sum_{ij} p_{ij} q_{ij}.\label{eq:DistExpand}
\end{equation}
The first two terms can be evaluated in linear time using dynamic programming, as outlined in   Section 2. To compute the second term efficiently we first introduce a tree decomposition technique (Section 3) allowing the sum to be evaluated in $O(n \log n)$ time (Section 4). Both the tree decomposition and algorithm of Section 4 draw heavily on an algorithm of \cite{Brodal04} for computing the quartet distance between two trees. 

\section{Sums of squared distances}

In this section we show how to compute the sum of squared distances $\sum_{ij} p_{ij}^2$ in a tree in linear time. We begin by introducing some notation that will be used in the rest of the paper.

Select an arbitrary leaf $\rho$ and consider both $T_1$ and $T_2$ as rooted trees with root $\rho$.  We think of $\rho$ being at the top of the tree and the other leaves being at the bottom of the tree.
For any two edges $e,e'$ we write $e \preceq e'$ if the path from $e$ to $\rho$ passes through $e'$.  We write $e \prec e'$ if $e \preceq e'$ and $e \neq e'$.  Hence if $e$ is the edge incident with the root $\rho$ then $e' \prec e$ for all other edges $e'$. We say that $e$ is external if it is incident to a leaf other than $\rho$; otherwise $e$ is internal. When $e$ is internal let $e_L$ and $e_R$ denote the edges incident and immediately below $e$. 

We will use $e,e'$ to denote edges in $T_1$ and $f,f'$ to denote edges in $T_2$.
We let $x_e$ denote the length of an edge $e$ in $T_1$ and $y_f$ the length of an edge $f$ in $T_2$. Let $A_{ij}$ denote the set of edges on the path from $i$ to $j$ in $T_1$ and let $B_{ij}$ denote the corresponding set in $T_2$. Hence
\[p_{ij} = \sum_{e \in A_{ij}} x_e \quad\quad q_{ij} = \sum_{f \in B_{ij}} y_f.\]
Let $n(e)$ denote the number of leaves $\ell$ such that the path from $\ell$ to $\rho$ passes through $e$. Define
\[\alpha(e) = \sum_{e' \preceq e} n(e') x_{e'}.\]

\begin{prop} \label{prop:singleTree}
\[ \sum_{ij} (p_{ij})^2  = \sum_{\mbox{ \footnotesize $e$ internal } } \Big[ x_e (n-n(e)) (2\alpha(e) - n(e)x_e)  + 2\alpha(e_L)\alpha(e_R) \Big] .\]
\end{prop}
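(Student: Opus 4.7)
The plan is to expand the square and swap the order of summation. Writing $p_{ij}^2 = \sum_{e,e' \in A_{ij}} x_e x_{e'}$ gives
\[\sum_{ij} p_{ij}^2 = \sum_{e,e'} x_e x_{e'} \cdot N(e,e'),\]
where $N(e,e')$ counts the leaf pairs whose connecting path in $T_1$ contains both edges. The first step is to compute $N(e,e')$ by case analysis on the relative position of $e$ and $e'$ in the rooted tree: they may coincide, be nested (one lies $\prec$ the other), or lie in disjoint subtrees. Using the observation that an edge $e$ lies on the $i$-$j$ path iff exactly one of $i,j$ is in the subtree below $e$, one finds $N(e,e) = n(e)(n-n(e))$, $N(e,e') = n(e)(n-n(e'))$ when $e \prec e'$, and $N(e,e') = n(e)n(e')$ in the disjoint case.

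Next I would match these three contributions against the right-hand side. Expanding $x_e (n - n(e))(2\alpha(e) - n(e)x_e)$ via $\alpha(e) = n(e)x_e + \sum_{e' \prec e} n(e')x_{e'}$ yields
\[n(e)(n-n(e))\,x_e^2 + 2(n - n(e))\,x_e\sum_{e' \prec e} n(e')x_{e'}.\]
Summed over internal $e$ this captures the diagonal contribution and all nested-pair contributions; restricting the outer sum to internal $e$ costs nothing here because $e' \prec e$ already forces $e$ to be internal.

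The disjoint-pair contribution is captured by $2\alpha(e_L)\alpha(e_R)$. The key combinatorial step is the observation that every unordered pair $\{e'', e'''\}$ of edges lying in incomparable subtrees has a unique ``least'' internal edge $e$ with $e'' \preceq e_L$ and $e''' \preceq e_R$ (or vice versa). Expanding $\alpha(e_L)\alpha(e_R)$ as a double sum and then summing over internal $e$ therefore collects each such pair exactly once, with the required coefficient $n(e'')n(e''')x_{e''}x_{e'''}$; the factor of $2$ accounts for the two assignments of the two edges to $(e_L,e_R)$.

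The main obstacle is pinning down this LCA-style correspondence between disjoint edge pairs and internal edges $e$, so that $2\alpha(e_L)\alpha(e_R)$ picks up exactly the disjoint-pair contribution with no double counting or omissions. Once the bijection is established, the remainder is routine bookkeeping, matching the three types of edge pair to the three terms on the right-hand side.
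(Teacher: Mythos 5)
Your approach is the same as the paper's: expand the square, reduce to $\sum_{e_1,e_2} x_{e_1}x_{e_2}\,N(e_1,e_2)$, classify edge pairs as equal, nested, or incomparable, compute the pair count in each case ($n(e)(n-n(e))$, $n(e_1)(n-n(e_2))$, $n(e_1)n(e_2)$ respectively), and regroup the nested and diagonal contributions through $\alpha$ while collecting each incomparable pair at the unique internal edge whose two child subtrees separate them. The ``LCA-style correspondence'' you single out as the main obstacle is exactly the one-sentence observation the paper makes (the two upward paths from $e''$ and $e'''$ first meet below some internal edge $e$, placing them under $e_L$ and $e_R$ respectively), and your counts and your expansion of $x_e(n-n(e))(2\alpha(e)-n(e)x_e)$ into $n(e)(n-n(e))x_e^2 + 2(n-n(e))x_e\sum_{e'\prec e}n(e')x_{e'}$ are all correct.

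The one genuine flaw is your claim that ``restricting the outer sum to internal $e$ costs nothing here because $e'\prec e$ already forces $e$ to be internal.'' That justification covers only the nested term. The diagonal term $n(e)(n-n(e))x_e^2$ does not vanish for an external edge $e$: there it equals $(n-1)x_e^2$, and these contributions are genuinely dropped when the outer sum is restricted to internal edges. You can check on a three-leaf tree rooted at $\rho$ with pendant edges $e_a,e_b$ that the internal-only sum misses exactly $2x_a^2+2x_b^2$. So the identity you have actually derived is the one with the first bracket summed over \emph{all} edges (which is what the paper's own proof reaches before asserting ``the result follows''); the displayed statement, restricted to internal edges, is off by $\sum_{e\ \mathrm{external}}(n-1)x_e^2$. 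Your derivation is sound up to that last sentence; the last sentence is the step that would fail, and it should be replaced by summing the first term over all edges (noting that for external $e$ it degenerates to precisely the missing diagonal contribution $(n-1)x_e^2$).
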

\begin{proof}
Given two edges $e_1,e_2$ we let
\begin{equation}
\chi(e_1,e_2) = | \{ \mbox{pairs $ij$} : e_1,e_2 \in A_{ij}\} |,\label{eq:chi}\end{equation}
the number of pairs  having both $e_1$ and $e_2$ on the path between them. Then
\begin{align}
\sum_{ij} (p_{ij})^2 & = \sum_{ij} \left(\sum_{e_1 \in A_{ij}} x_{e_1} \right) \left( \sum_{e_2 \in A_{ij}} x_{e_2} \right) \nonumber \\
&= \sum_{ij} \sum_{e_1,e_2 \in A_{ij}}  x_{e_1} x_{e_2}\\
& = \sum_{e_1} \sum_{e_2} \sum_{ij: e_1, e_2 \in A_{ij}} x_{e_1} x_{e_2}\\
& = \sum_{e_1} \sum_{e_2} x_{e_1} x_{e_2} \chi(e_1,e_2)  \label{eq:pij2} 
\end{align}

If $e_1 \prec e_2$ then 
 $\chi(e_1,e_2) = n(e_1) (n - n(e_2))$.  Hence, for any $e_2$ we have
 \begin{align*}
 \sum_{e_1:e_1 \prec e_2} x_{e_1} x_{e_2} \chi(e_1,e_2) &=  x_{e_2} (n-n(e_2)) 
\sum_{e_1:e_1 \prec e_2} x_{e_1} n(e_1) \\
 & =x_{e_2} (n-n(e_2)) (\alpha(e_2)-n(e_2)x_{e_2}).
 \end{align*}
 
 If $e_1 \not \preceq e_2$ and $e_2 \not \preceq e_1$  then  $\chi(e_1,e_2) = n(e_1) n(e_2)$. Furthermore there is an edge $e$ with children $e_L,e_R$ such that, without loss of generality, $e_1 \preceq e_L$ and $e_2 \preceq e_R$.  For such an edge $e$ we have
 \begin{align*}
 \sum_{e_1:e_1 \preceq e_L} \sum_{e_2:e_2 \preceq e_R} x_{e_1} x_{e_2} \chi(e_1,e_2) & = 
\sum_{e_1:e_1 \preceq e_L} \sum_{e_2:e_2 \preceq e_R} x_{e_1} x_{e_2} n(e_1)n(e_2) \\
& = \alpha(e_L) \alpha(e_R).
\end{align*}

Summing up over all $e_1,e_2$ in \eqref{eq:pij2} we have
\begin{align*}
\sum_{ij} (p_{ij})^2 & = \sum_{e_1} \sum_{e_2} x_{e_1} x_{e_2} \chi(e_1,e_2)  \\
& = 2\sum_{e_2} \sum_{e_1 \prec e_2} x_{e_1} x_{e_2} \chi(e_1,e_2)  + 
2 \sum_e \sum_{e_1 \preceq e_L} \sum_{e_2 \preceq e_R} x_{e_1} x_{e_2} \chi(e_1,e_2)  
+ \sum_e x_{e} x_{e} \chi(e,e) \\
& = 2 \sum_{e_2}  x_{e_2} (n-n(e_2)) (\alpha(e_2)-n(e_2)x_{e_2}) +  2 \sum_e \alpha(e_L) \alpha(e_R) + \sum_e x_{e} x_{e} n(e)(n-n(e))
\end{align*}
and the result follows.  \end{proof}

\begin{prop} \label{prop:sumSquared}
The sum $\sum_{ij} (p_{ij})^2$ can be computed in linear time.
\end{prop}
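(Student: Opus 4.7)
The plan is to apply Proposition~\ref{prop:singleTree} directly, after showing that all the ingredients of its right-hand side, namely $n(e)$ and $\alpha(e)$ for every edge $e$, can be precomputed in linear time by a single bottom-up traversal of $T_1$.

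First I would handle $n(e)$. By definition $n(e)=1$ when $e$ is external, and the recurrence $n(e)=n(e_L)+n(e_R)$ holds whenever $e$ is internal, so a standard post-order traversal fills in all values of $n(e)$ in $O(n)$ time. For $\alpha(e)$ I would use that if $e$ is external then $\alpha(e)=n(e)x_e = x_e$, while if $e$ is internal the defining sum $\alpha(e)=\sum_{e'\preceq e} n(e')x_{e'}$ splits naturally as
\[
\alpha(e) \;=\; \alpha(e_L)+\alpha(e_R)+n(e)x_e,
\]
so the same post-order traversal extends to give all $\alpha(e)$ in $O(n)$ time as well.

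With these arrays in hand, each term
\[
x_e(n-n(e))\bigl(2\alpha(e)-n(e)x_e\bigr)+2\alpha(e_L)\alpha(e_R)
\]
appearing in Proposition~\ref{prop:singleTree} requires only a constant number of arithmetic operations, and the number of internal edges is at most $n-2$. Thus summing over internal $e$ is an additional $O(n)$ work, and the total running time is linear, as claimed.

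There is no real obstacle here: the genuinely combinatorial step was Proposition~\ref{prop:singleTree}, which already reduced $\sum_{ij}p_{ij}^2$ to a sum of local, easily-computable quantities. The only thing to verify is that both $n$ and $\alpha$ satisfy simple bottom-up recurrences, which they do by inspection of their definitions.
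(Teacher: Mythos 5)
Your argument is correct and is essentially the paper's own proof: the same post-order traversal, the same recurrences $n(e)=n(e_L)+n(e_R)$ and $\alpha(e)=\alpha(e_L)+\alpha(e_R)+n(e)x_e$ with base cases $n(e)=1$, $\alpha(e)=x_e$ for external edges, followed by a constant-work-per-edge evaluation of the formula in Proposition~\ref{prop:singleTree}. Nothing is missing.
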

\begin{proof}
If $e$ is external, $n(e) = 1$ and $\alpha(e) = x_e$. Otherwise 
\begin{align*}
n(e) & = n(e_L) + n(e_R) \\
\alpha(e) &= \alpha(e_L) + \alpha(e_R) + n(e) x_e.
\end{align*}
Hence with a post-order traversal of the tree we can compute $n(e)$ and $\alpha(e)$ for all edges $e$ in $O(n)$ time.  Computing the sum takes a further $O(n)$ time by Proposition~\ref{prop:singleTree}. $ \sum_{ij} (q_{ij})^2$ can be computed in the same way.
  \end{proof}

\section{Segment decomposition}

In this section we introduce a hierarchical decomposition of the edge set of $T_2$ that forms the structure used in our dynamical programming algorithm in Section 4.

Let $Q$ be a connected subset of $E(T_2)$, the set of edges of $T_2$. We define the {\em boundary} of $Q$ to be the set of vertices incident both to edges within $Q$ and to edges outside $Q$:
\[ \partial Q = \{v : \mbox{ there are $e \in Q$, $e' \not \in Q$ incident with $v$ }\}.\]
The {\em degree} of $Q$ is the cardinality of $\partial Q$.  A {\em segment} of $T_2$ is a connected subset of $E(T_2)$ with degree at most two.

A {\em segment decomposition} for $T_2$ is a binary tree $T_D$ such that
\begin{enumerate}
\item[(D1)] The leaves of $T_D$ correspond to edges in $E(T_2)$ (i.e. minimal segments);
\item[(D2)] Each node of $T_D$ corresponds to a segment of $T_2$;
\item[(D3)] The segment corresponding to an internal node of $T_D$ equals the disjoint union of the segments corresponding to its children.
\end{enumerate}
An example of segment decomposition is given in Figure \ref{segDec}. 

The main result in this section is that we can, in linear time, construct a segment decomposition for $T$ with height $O( \log n)$. 

\begin{figure}[htbp]
\begin{center}
\includegraphics[width=0.8\linewidth]{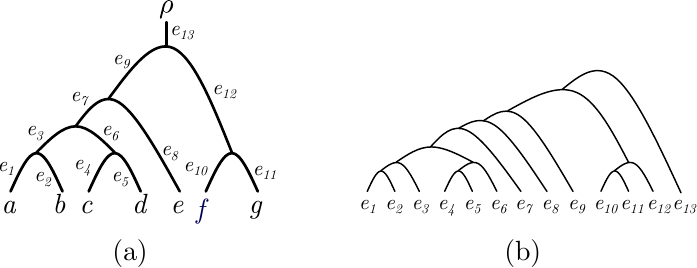}
\caption{(a) A phylogenetic tree and (b) a segment decomposition for it.}
\label{segDec}
\end{center}
\end{figure}

The definition of a segment decomposition is based on the tree decomposition used by \cite{Brodal04} to compute quartet-based distances, which in turn are based on techniques for efficient parsing of expressions \cite{Brent74,Cohen95}. The main difference with  \cite{Brodal04}  is that the segment decomposition is based on partitioning the set of edges, rather that the set of vertices, and that we were able to obtain a tighter bound on the height.

Our algorithm for constructing $T_D$ is agglomerative: we start with a  degree one vertex for each edge in $E(T_2)$; these form the leaves of $T_D$. Each iteration, we identify pairs of maximal nodes corresponding to pairs of segments which can be combined to give new segments. We make the nodes in each pair children of a new node. The process continues until one node remains and $T_D$ is complete. 

The following Proposition shows that in any partition of $E(T_2)$ into segments we can always find a large number of pairs of disjoint segments which can be merged to give other segments. 

\begin{prop}~\label{manypairs}  
Let $T$ be a binary tree. Let $\sM$ be a collection of segments which partition $E(T)$. Then there are at least $\frac{|\sM|}{4}$ non-overlapping pairs $(A,B)$ such that $A,B \in \sM$ and $A \cup B$ is a segment of $T$.
\end{prop}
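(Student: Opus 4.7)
The plan is to build a bipartite incidence graph encoding how segments meet in $T$ and then extract disjoint mergeable pairs from it by an amortized pairing argument. Let $V_2$ denote the set of vertices of $T$ that belong to two or more segments of $\sM$, and define the bipartite graph $I$ with parts $\sM$ and $V_2$ and an edge $\{A,v\}$ whenever $v \in \partial A$. I would first show that $I$ is itself a tree. Two distinct segments can share at most one vertex of $T$, because a second shared vertex would force two internally-disjoint paths inside $T$ between them and thus create a cycle; more generally, any cycle in $I$ pulls back to a cycle in $T$. Connectivity of $I$ is inherited from $T$. Because $T$ is binary, segment-nodes have $I$-degree $|\partial A| \in \{0,1,2\}$ and branch-nodes have $I$-degree $\sigma(v) \in \{2,3\}$, where $\sigma(v)$ is the number of segments containing $v$.

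I would then characterize when $A \cup B$ is itself a segment. If $A$ and $B$ share their unique common vertex $v$, a direct computation of the boundary of the union gives $|\partial(A \cup B)| = |\partial A| + |\partial B| - 2$ when $\sigma(v) = 2$, because $v$ then drops out of the boundary entirely, and $|\partial(A \cup B)| = |\partial A| + |\partial B| - 1$ when $\sigma(v) = 3$, because $v$ still has an edge in the third segment at $v$. Consequently $A \cup B$ is a segment iff either $\sigma(v) = 2$, or $\sigma(v) = 3$ and $\min(|\partial A|,|\partial B|) = 1$. In particular, every $\sigma = 2$ branch vertex of $I$, and every branch vertex of $I$ adjacent to a leaf of $I$ (a degree-$1$ segment), yields a mergeable pair.

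To finish, I would sweep $I$ bottom-up, rooted at an arbitrary non-leaf. At each step the deepest unprocessed leaf segment of $I$ is paired with one of its mergeable neighbors at its parent branch vertex -- always possible, since a leaf segment has $|\partial|=1$ and so both characterization cases apply -- and both are then removed from further consideration, producing disjoint mergeable pairs by construction. The key amortized count is that each formed pair ``blocks'' at most two further segments of $\sM$ from being paired, at most one on each side of the branch path of $I$ where the pair sits, so the number of unpaired segments is at most twice the number of pairs, whence $\text{pairs} \ge |\sM|/4$. The main obstacle is precisely this amortized bound of two blocked segments per pair: a careful local case analysis at each branch vertex is needed, distinguishing $\sigma(v) = 2$ from $\sigma(v) = 3$ and tracking the degrees of the incident segments, so that even the worst cases -- such as a long path of $\sigma = 2$ branch vertices, or a $\sigma = 3$ vertex whose three incident segments all have $I$-degree $2$ -- contribute at most two blocked segments per produced pair.
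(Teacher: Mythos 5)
Your setup is sound and quite close in spirit to the paper's: your incidence tree $I$ is essentially the graph $\sG_\sM$ the paper builds on the boundary vertices (each degree-two segment appearing as a subdivision vertex, each degree-one segment as a pendant leaf), and your characterisation of when $A\cup B$ is a segment --- they share a vertex $v$, and either $\sigma(v)=2$, or $\sigma(v)=3$ and one of the two boundaries has size one --- is correct and is exactly what the paper uses implicitly. The gap is in the extraction and counting of pairs, which you yourself flag as the ``main obstacle.'' First, your greedy rule only ever initiates a merge from a leaf segment of $I$, i.e.\ one with $|\partial A|=1$; your feasibility claim (``always possible, since a leaf segment has $|\partial|=1$'') depends on exactly that. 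Take a caterpillar whose edges are partitioned into $A_1,\dots,A_m$ with $\partial A_1=\{u_1\}$, $\partial A_i=\{u_{i-1},u_i\}$ for $1<i<m$, $\partial A_m=\{u_{m-1}\}$, and $\sigma(u_i)=2$ throughout (e.g.\ $A_i$ consists of a spine edge together with the pendant edge below it). Every consecutive pair is mergeable and $\lfloor m/2\rfloor$ disjoint pairs exist, but $I$ is a single path whose only leaves are $A_1$ and $A_m$: your sweep forms two pairs and stalls, far short of $m/4$. If instead ``leaf'' is meant dynamically in the forest left after deletions, the feasibility claim fails, since a newly exposed leaf can be a degree-two segment whose available boundary vertex has $\sigma=3$ with both other incident segments of degree two, where no merge is possible.

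Second, and more fundamentally, the amortised bound ``each pair blocks at most two further segments'' \emph{is} the proposition (it yields exactly $|\sM|\le 4\cdot\mathrm{pairs}$, and the constant is essentially tight), yet no charging scheme is exhibited. A degree-two segment sitting alone on a branch-free stretch between two $\sigma=3$ vertices whose other incident segments all have degree two is unmergeable and need not be adjacent in $I$ to any pair your sweep forms, so it cannot be charged locally; the charge must be routed across branch vertices, and nothing in the sketch controls how many such segments accumulate on one pair. The paper sidesteps all of this: it decomposes $\sG_\sM$ into maximal branch-free paths, sorts the segments along each path lexicographically by their boundary indices and pairs consecutive ones (so a path carrying $k$ segments always yields $\lfloor k/2\rfloor$ pairs, independent of their degrees), and then bounds the total loss by a two-constraint linear program in the number of degree-three vertices and the number of odd leaf-paths. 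To repair your argument you would need either to adopt that global path-by-path pairing, or to specify and verify a charging scheme assigning every unpaired segment to a formed pair with multiplicity at most two; as written, the proof is a plan whose decisive step is missing and whose stated greedy does not deliver the claimed pairs.
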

\begin{proof}
Let $\sG_\sM = (\sV_\sM,\sE_\sM)$ be the graph with vertex set  
\[ \sV_\sM = \bigcup_{A \in \sM} \partial A\]
and edge set 
\[ \sE_\sM = \left \{ \{u,v\}:  \partial A = \{u,v\} \mbox{ for some $A \in \sM$ } \right\}.\]
Decompose $\sG_\sM$ into maximal paths $P_1,P_2,\ldots,P_\kappa$ which contain no degree three vertices in their interiors. 
For each $i$, let $\sM_i$ be the set of elements  $A \in \sM$ such that $\partial A \subseteq P_i$. The sets $\sM_i$ partition $\sM$.

Fix one path $P_i = v_1,v_2.\ldots,v_\ell$. We order the elements of $\sM_i$ lexicographically with respect to the indices of their boundary vertices. In other words, if $A,B \in \sM_i$ satisfy $\partial A = \{v_j,v_k\}$ and $\partial B = \{v_\ell,v_m\}$ (where we might have $j=k$ or $\ell = m$) then we write $A < B$ if  $\max(j,k) < \max(\ell,m)$ or $\max(j,k) = \max(\ell,m)$ and $\min(j,k)<\min(\ell,m)$. With this ordering, if  $A_k$ and $A_{k+1}$ are adjacent then $(A_k \cup A_{k+1})$ is connected and has degree at  most two.  Hence by pairing off $A_1$ and $A_2$, $A_3$ and $A_4$, and so on, we can construct $\lfloor \frac{\sM_i}{2} \rfloor$ disjoint pairs.  An example is given in Figure \ref{paring}.

\begin{figure}[htbp]
\begin{center}
\includegraphics[width=0.7\linewidth]{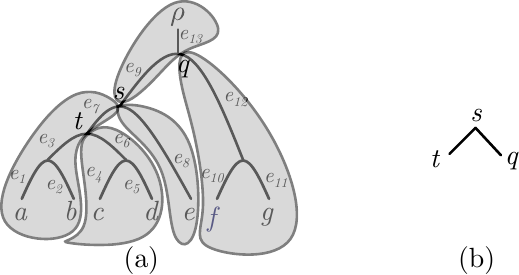}
\caption{(a) A phylogenetic tree with the segment decomposition  $\sM=\{\{e_1,e_2,e_3,e_7\},\{e_4,e_5,e_6\},\{e_8\},\{e_9,e_{13}\},\{e_{10},e_{11},e_{12}\}\}$ drawn on  it, with boundary sets respectively $\{t,s\},\{t,t\},\{s,s\},\{s,q\},\{q,q\}$. (b) The corresponding $\sG_\sM$. For this decomposition, there is a single maximal path $P_1 =t,s,q :=v_1,v_2,v_3$ and boundary sets become respectively $\{v_1,v_2\},\{v_1,v_1\},\{v_2,v_2\},\{v_2,v_3\},\{v_3,v_3\}$. Thus, the ordering of $\sM$ is $\{\{e_4,e_5,e_6\},\{e_1,e_2,e_3,e_7\},\{e_8\},\{e_9,e_{13}\},\{e_{10},e_{11},e_{12}\}\} $. }
\label{paring}
\end{center}
\end{figure}

The total number of pairs we obtain this way is given by $\sum_{i=1}^\kappa \lfloor \frac{|\sM_i|}{2} \rfloor$.  We will determine a lower bound for this sum. Let $d$ be the number of degree three vertices in $\sG_\sM$. \ Since $\sG_\sM$ is connected and acyclic there are $d+2$ paths $P_i$ which contain a degree one vertex in $\sG_\sM$ and $d-1$ paths which do not. If $P_i$ contains a degree one vertex then $\sM_i$ contains at least one component with degree two and another component with boundary equal to the degree one vertex, so $|\sM_i| \geq 2$. 
If $P_i$ contains no degree one vertices then $|\sM_i|$ is at least one. Let $x$ denote the number of paths $P_i$ which contain a degree one vertex and for which $|\sM_i|$ is odd (and hence at least three). We have
\[|\sM| = \sum_{i=1}^\kappa |\sM_i| \geq  3x + 2(d+2 - x) + (d-1) = x + 3d + 3\]
as well as $0 \leq x \leq d+2$ and $d \geq 0$. 

We have that $|\sM_i|$ is even for at least $(d+2) - x$ paths ending in a degree one vertex, and for these paths $\frac{|\sM_i|}{2} = \lfloor \frac{|\sM_i|}{2} \rfloor$. Thus
\[\frac{|\sM|}{2} - \sum_{i=1}^\kappa \lfloor \frac{|\sM_i|}{2} \rfloor  \leq \frac{x}{2} + \frac{d-1}{2}.\]
To bound the right hand side, note that the linear program
\begin{align*}
\max \quad & x+d \\
\mbox{subj. to} \quad & x -d \leq 2 \\
& x + 3d \leq |\sM| - 3
\end{align*}
has solution $d = \frac{|\sM|-5}{4}$, $x = \frac{|\sM|+3}{4}$ and so $x+d \leq \frac{2|\sM| - 2}{4}$. Hence
\[ \frac{|\sM|}{2} - \sum_{i=1}^\kappa \lfloor \frac{|\sM_i|}{2} \rfloor  \leq  \frac{|\sM|}{4} - \frac{3}{4} \]
and $\sum_{i=1}^\kappa \lfloor \frac{|\sM_i|}{2} \rfloor$, the number of pairs, is bounded below by  $\frac{|\sM|}{4}$.
 \end{proof}

We can now state the algorithm for constructing $T_D$. Initially $T_D$ is a set of isolated vertices. As the algorithm progresses, vertices are combined into larger trees, so that each iteration $T_D$ is a forest. The algorithm terminates when $T_D$ contains a single tree. 

At each iteration let $\sM$ denote the partition of the edge set of $E(T)$ into segments corresponding to the maximal elements of the incomplete tree $T_D$. Rather than store this partition explicitly, we maintain a linked list $\sB$ of boundary nodes. For each element $v$ in the list we maintain pointers to maximal nodes $T_D$ corresponding to segments in $\sM$ having $v$ in their boundaries. In addition, we maintain pointers from each node in $T_D$ to the boundary nodes of the corresponding segments.  

\begin{enumerate}
\item Initialize $T_D$ with a forest of degree-one vertices corresponding to each edge of $E(T_2)$. Hence we initialise $\sB$ with one element for each vertex in $V(T_2)$, with the associated pointers. At this point, $\sM$ is the partition of $E(T_2)$ putting each edge into a separate block.
\item {\bf While} $T_D$ is disconnected {\bf do} 
\begin{enumerate}
\item Using the construction in Proposition~\ref{manypairs} determine a set of at least $k \geq \frac{|\sM|}{4}$ pairs $(A_1,B_1),\ldots,(A_k,B_k)$ of disjoint elements of $\sM$ such that $A_j \cup B_j$ has at most two boundary points. \label{lemstep}
\item For each pair $(A_i,B_i)$, $i=1,2,\ldots,k$, create a new node of $T_D$ corresponding to $A_i \cup B_i$ and with children corresponding to $A_i$ and $B_i$.
\item Update the list $\sB$ of boundary vertices and the associated pointers.
\end{enumerate}
\end{enumerate}

\begin{thm}
We can construct a segment decomposition tree $T_D$ for $T_2$ with height $O(\log n)$ in $O(n)$ time.
\end{thm}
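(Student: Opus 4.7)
The plan is to verify correctness, the height bound, and the time bound for the agglomerative algorithm described above. Correctness (that the final $T_D$ really is a segment decomposition satisfying (D1)--(D3)) is almost immediate from the construction: the leaves of $T_D$ are exactly the single edges of $E(T_2)$, each internal node created in step 2(b) corresponds to $A_i\cup B_i$, which has boundary of cardinality at most two by the choice made in step 2(a), and by construction it is the disjoint union of the two children's segments. What needs a line of verification is that the partition $\sM$ always covers $E(T_2)$ and that step 2 makes progress until a single segment $E(T_2)$ remains, so I would first argue invariantly that the union of the maximal segments of $T_D$ equals $E(T_2)$ and that the algorithm terminates.

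For the height bound, I would chase the size of $\sM$ across iterations. Applying Proposition~\ref{manypairs} at the start of iteration $t$, we pair at least $|\sM_t|/4$ disjoint pairs, each of which reduces the number of maximal segments by one. Hence $|\sM_{t+1}|\le \tfrac34|\sM_t|$, so after at most $\log_{4/3}|E(T_2)| = O(\log n)$ iterations we have $|\sM|=1$ and the algorithm halts. The height of $T_D$ is bounded by the number of iterations because the depth of any leaf of $T_D$ increases by at most one per iteration (a leaf's current ancestor root only gets paired in one pair per round), giving height $O(\log n)$.

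For the linear-time bound, I would show that one iteration can be executed in $O(|\sM_t|)$ time, so that the total running time is dominated by the geometric series $\sum_t |\sM_t|=O(n)$. Concretely: the graph $\sG_\sM$ in the proof of Proposition~\ref{manypairs} has $O(|\sM_t|)$ vertices and edges and can be built from the linked list $\sB$ of boundary nodes and the pointers we maintain; identifying maximal paths $P_1,\dots,P_\kappa$, sorting each $\sM_i$ lexicographically by its boundary indices along $P_i$ (using bucket/radix sort on the already available vertex orderings of the $P_i$), and pairing them takes $O(|\sM_t|)$ total time. Creating the $k$ new nodes, wiring up parent/child pointers, and updating $\sB$ (a boundary vertex disappears exactly when both incident segments are merged across it) is also linear in the number of pairs handled.

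The main obstacle, in my view, is making step 2(a) run in $O(|\sM_t|)$ time: naively each iteration could take $\Omega(n)$ rather than $\Omega(|\sM_t|)$ if the data structures for $\sG_\sM$, the paths $P_i$, and the lexicographic orderings are rebuilt from scratch. I would spend the bulk of the write-up describing how the linked list $\sB$, together with cross-pointers between boundary vertices and their incident maximal segments, supports local updates so that the work at iteration $t$ is charged only to the segments in $\sM_t$ that are actually merged (plus their neighbors along the paths $P_i$). Once that amortization is in place, the geometric decay $|\sM_{t+1}|\le \tfrac34|\sM_t|$ immediately gives the $O(n)$ total time.
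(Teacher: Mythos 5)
Your proposal is correct and follows essentially the same route as the paper: the geometric decay $|\sM_{t+1}|\le \tfrac34|\sM_t|$ from Proposition~\ref{manypairs} gives both the $\log_{4/3}(2n-3)$ bound on the number of iterations (hence the height) and, via the geometric series, the $O(n)$ total time. The implementation concern you flag about executing step 2(a) in $O(|\sM_t|)$ rather than $O(n)$ time is the same point the paper addresses (more tersely) by maintaining the linked list $\sB$ of boundary nodes with cross-pointers.
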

\begin{proof}
We only merge nodes if the union of their corresponding segments is also a segment. Hence $T_D$ will be a segment decomposition tree. It remains to prove the bound on height and running time.

We note that $|\sM|$ reduces by a factor of $\frac{3}{4}$ each iteration. Hence the number of iterations is at most $ \log_{\frac{4}{3}} (2n-3)$, which is also a bound on the height of the tree. 

Using the list of boundary points $\sB$ we can construct construct $\sG_{\sM}$ and identify pairs, in $O(|\sM|)$ time each iteration. Thus the total running time is  at most $O( n (\sum_{i=0} \left( \frac{3}{4} \right)^i )) = O(n)$ time. 
 \end{proof}

We can strengthen the height bound. We say that a tree is $k$-locally balanced if, for all nodes $v$ in
the tree, the height of the subtree rooted at $v$ is at most $k \cdot (1 + log|v|)$.  As the algorithm can be applied recursively on each node of $T_D$ we have that the global height bound applies to each node. Hence

\begin{cor}
The segment decomposition $T_D$ is ($1/log \frac{4}{3}$)-
locally balanced.
\end{cor}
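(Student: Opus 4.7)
The plan is to exploit the self-similarity of the decomposition. For any internal node $v$ of $T_D$, the subtree of $T_D$ rooted at $v$ is itself a segment decomposition of the segment $S_v \subseteq E(T_2)$ that $v$ represents: its $|v|$ leaves are exactly the edges making up $S_v$, and properties (D1)--(D3) are inherited. I would argue that this subtree is, in effect, the output of the same agglomerative construction applied to $S_v$ alone, so that the height bound of the preceding theorem transfers with $|v|$ playing the role of $n$.

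The key technical point is that Proposition~\ref{manypairs} and its proof go through verbatim when $E(T)$ is replaced by any segment $S_v$ of a binary tree: the graph $\sG_\sM$, its decomposition into maximal paths, and the lexicographic pairing all depend only on the boundary vertices appearing inside $S_v$. Consequently, at each iteration of the global algorithm the partition of $S_v$ induced by the maximal nodes of $T_D$ lying in the subtree at $v$ shrinks by at least a factor of $3/4$. Hence the height of the subtree rooted at $v$ is at most $\log_{4/3} |v|$.

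Converting bases (taking $\log$ to be base $2$), $\log_{4/3} |v| = \log |v| / \log(4/3) \leq (1 + \log |v|)/\log(4/3)$, which is precisely the $(1/\log(4/3))$-locally balanced condition. The obstacle I anticipate is a careful verification that the global algorithm's pairing choices inside $S_v$ really do force the restricted partition to shrink by the required factor at each step; this should follow because the pairing in Proposition~\ref{manypairs} is determined locally along each maximal path of $\sG_\sM$, so a path lying inside $S_v$ is treated identically whether we view the partition globally or restricted to $S_v$.
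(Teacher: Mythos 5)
Your argument is essentially the paper's: the published proof consists of the single observation that the algorithm can be applied recursively on each node of $T_D$, so the global $\log_{4/3}$ height bound applies to each subtree --- which is exactly your self-similarity argument, down to the base-conversion giving the constant $1/\log\frac{4}{3}$. The one verification you flag as outstanding (that the partition restricted to $S_v$ really shrinks by a factor of $3/4$ per iteration, which is not completely immediate since maximal paths of $\sG_\sM$ can straddle $\partial S_v$ and the counting in Proposition~\ref{manypairs} is global rather than per-path) is left equally unaddressed by the paper, so if anything you have been more explicit about where the remaining work lies.
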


\section{Computing the inner product}

 In this section we show that $\sum_{ij} p_{ij} q_{ij}$ can be computed in $O(n \log n)$ time, so that the main result follows from Eq.~\eqref{eq:DistExpand}.

A \emph{(taxon) colouring} is an assignment $c$ of the colors black and white to the taxa. For each edge $e$ of $T_1$ we let $c_e$ denote a coloring assigning black to those taxa on one side of $e$ and white to those on the other. For each edge $f$  in $E(T_2)$ and each colouring $c$ of the set of taxa, we let $\hchi(c,f)$ denote the number of pairs of taxa $ij$ such that  $i$ and $j$ have different colours and they label leaves on different sides of $f$.

\begin{lem} \label{lem:InnerIsSum}
\begin{equation}
\sum_{ij} p_{ij} q_{ij}  =  \sum_{e \in E(T_1) } \sum_{f \in E(T_2) } x_{e} y_{f} \hchi(c_e,f)  \label{eq:pijqij} 
\end{equation}
\end{lem}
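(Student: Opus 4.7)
The plan is to substitute the path-length expansions $p_{ij} = \sum_{e \in A_{ij}} x_e$ and $q_{ij} = \sum_{f \in B_{ij}} y_f$ into the product $p_{ij} q_{ij}$, then interchange the order of summation so that the outer sum ranges over pairs $(e,f) \in E(T_1) \times E(T_2)$ and the inner sum counts the pairs of taxa $ij$ for which $e \in A_{ij}$ and $f \in B_{ij}$ simultaneously. The final step is to recognize this count as $\hchi(c_e, f)$.

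Concretely, I would first write
\[
\sum_{ij} p_{ij} q_{ij} = \sum_{ij} \left( \sum_{e \in A_{ij}} x_e \right)\left( \sum_{f \in B_{ij}} y_f \right) = \sum_{ij} \sum_{e \in A_{ij}} \sum_{f \in B_{ij}} x_e\, y_f,
\]
and then reindex the triple sum as
\[
\sum_{e \in E(T_1)} \sum_{f \in E(T_2)} x_e\, y_f \cdot \bigl| \{ ij : e \in A_{ij},\ f \in B_{ij}\} \bigr|.
\]
All of this is a routine swap of summation order and is justified because each product $x_e y_f$ contributes to the sum exactly once for every pair $ij$ whose $T_1$-path contains $e$ and whose $T_2$-path contains $f$.

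The only substantive step is identifying the bracketed cardinality with $\hchi(c_e, f)$. For this I would use the standard observation that an edge $e$ of a tree lies on the path between two leaves $i$ and $j$ if and only if $i$ and $j$ sit in the two different components of the tree obtained by deleting $e$. Thus $e \in A_{ij}$ is equivalent to $c_e$ assigning different colours to $i$ and $j$, and $f \in B_{ij}$ is equivalent to $i$ and $j$ labelling leaves on opposite sides of $f$ in $T_2$. The conjunction of these two conditions is, by definition of $\hchi$ in the paragraph preceding the lemma, precisely what $\hchi(c_e, f)$ counts.

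I do not expect any real obstacle here: the lemma is essentially a bookkeeping identity, and the only thing one has to be careful about is whether pairs $ij$ are ordered or unordered. Since both sides of \eqref{eq:pijqij} and the definition of $\hchi$ use the same convention for summing over pairs, matching conventions makes the identification immediate, and the lemma follows.
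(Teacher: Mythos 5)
Your proposal is correct and follows essentially the same route as the paper: expand $p_{ij}q_{ij}$, interchange the order of summation, and identify the count $\bigl|\{ij : e \in A_{ij},\ f \in B_{ij}\}\bigr|$ with $\hchi(c_e,f)$ via the observation that $e \in A_{ij}$ exactly when $c_e$ colours $i$ and $j$ differently. The paper's proof is just the displayed chain of equalities; your additional remarks spelling out the identification with $\hchi$ are a welcome (and correct) elaboration of what the paper leaves implicit.
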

\begin{proof}
\begin{align}
\sum_{ij} p_{ij} q_{ij} & = \sum_{ij}  \left(\sum_{e:e \in A_{ij}} x_{e} \right) \left( \sum_{f:f \in B_{ij}} y_{f} \right) \nonumber \\
&= \sum_{ij} \sum_{e \in A_{ij}}   \sum_{f \in B_{ij}} x_e y_{f}\\
& = \sum_{e \in E(T_1)} \sum_{f \in E(T_2)} x_{e} y_{f} \hchi(c_e,f)  .
\end{align}
 \end{proof}

For the remainder of this section we will assume that the vertices in $T_2$ are indexed $v_1,v_2,\ldots,v_{2n-3}$. The actual ordering does not matter; it is only used to help presentation.

Let $T_D$ be the segment decomposition tree constructed for $T_2$ using the Algorithm in Section 3. For each node $v$ of $T_D$ we let $Q_v \subseteq E(T_2)$ denote corresponding segment in $T_2$.
The overall strategy at this point is to compute values for each node in $T_D$ which will allow us to: (i) compute, for an initial choice of  $e \in E(T_1)$, the sum $ \sum_{f \in E(T_2) } x_{e} y_{f} \hchi(c_e,f) $ in linear time, and 
(ii) update this computation efficiently as we iterate in a particular way through edges $e$ of $T_1$.

We will store three pieces of information at every non-root node $v$ of $T_D$, the exact type of information stored being dependent on the degree of the segment $Q_v$ corresponding to $v$.\\
If $Q_v$ is degree one then we store: 
\begin{itemize}
\item[$\circ$] Two integer counts $w_v,b_v$
\item[$\circ$]  A description (e.g. coefficients) for a quadratic polynomial $\phi_v(\cdot,\cdot)$ with two variables. 
\end{itemize} 
If $Q_v$ has degree two then we store:
\begin{itemize}
\item[$\circ$]  Two integer counts $w_v,b_v$
\item[$\circ$] A description (e.g. coefficients) for a quadratic polynomial $\phi_v(\cdot,\cdot,\cdot,\cdot)$ with four variables.
\end{itemize}

We now show how the values $b_v,w_v$ and $\phi_v$ are computed using a colouring $c$ of the taxa. We start at the leaves of $T_D$ and work upwards towards the root.

Suppose that $v$ is a leaf of $T_D$, so that $Q_v$ contains a single edge $f$ of $T_2$. There are two cases. 
\begin{enumerate}
\item The edge  $f$ is  incident with a leaf $u$ of $T_2$, so $Q_v$ has degree one. If $c(u)$ is black then $b_v = 1$ and $w_v = 0$, while if $c(u)$ is white we have $w_v = 1$ and $b_v = 0$. In either case 
\begin{equation} \phi_v(b,w) = y_f (b \cdot w_v + w \cdot b_v).\end{equation} 
\item The edge $f$ is not incident with a leaf of $T_2$, so $Q_v$ has degree two. Then $b_v = w_v = 0$ and 
\begin{equation} \phi_v(b_1,w_1,b_2,w_2) = (b_1 w_2 + b_2 w_1) y_f.\end{equation} 
\end{enumerate}

Now suppose that $v$ is an internal vertex of $T_D$. 
Once again there are several cases, however in all cases we have 
\begin{align*}
b_v & = b_{v_L} + b_{v_R} \\
w_v & = w_{v_L} + w_{v_R}.
\end{align*}
\begin{enumerate}
  \setcounter{enumi}{2}
\item Suppose $Q_{v_L}$ and $Q_{v_R}$ have degree one. Then
\begin{equation} \phi_v(b,w) = \phi_{v_L}(b+b_{v_R},w+w_{v_R}) +  \phi_{v_R}(b+b_{v_L},w+w_{v_L}).  \label{eq:Qv1} \end{equation}
\item Suppose $Q_{v_L}$ has degree two and $Q_{v_R}$ has degree one, where $\partial Q_{v_L} = \{v_i,v_j\}$ and $Q_{v_R} = \{v_j\}$. 
\begin{align}
\intertext{(a) If $Q_v$ has degree one and $i<j$ then}
\phi_v(b,w) &= \phi_{v_L}(b,w,b_{v_R},w_{v_R}) + \phi_{v_R}(b+b_{v_L},w+w_{v_L});\\
\intertext{(b) If $Q_v$ has degree one and $i>j$ then}
\phi_v(b,w) &= \phi_{v_L}(b_{v_R},w_{v_R},b,w) + \phi_{v_R}(b+b_{v_L},w+w_{v_L});\\
\intertext{(c) If $Q_v$ has degree two and $i<j$ then} 
\phi_v(b_1,w_1,b_2,w_2) &= \phi_{v_L}(b_1,w_1,b_2+b_{v_R},w_2+w_{v_R}) + \phi_{v_R}(b_1+b_2+b_{v_L},w_1+w_2+w_{v_L});\\
\intertext{(d) If $Q_v$ has degree two and $i>j$ then}
\phi_v(b_1,w_1,b_2,w_2) &= \phi_{v_L}(b_1+b_{v_R},w_1+w_{v_R},b_2,w_2) + \phi_{v_R}(b_1+b_2+b_{v_L},w_1+w_2+w_{v_L}).
\end{align}
\item The case when $Q_{v_L}$ has degree one and $Q_{v_R}$ has degree two is symmetric.
\item Suppose that $Q_{v_L}$ and $Q_{v_R}$ have degree two, that $\partial Q_{v_L} = \{v_i,v_j\}$ and $\partial Q_{v_R} = \{v_j,v_k\}$. We can assume that $i<k$ since the alternative case follows by  symmetry. This leaves three possibilities:
\begin{align}
\intertext{(a) If $i<j$ and $j<k$ then} 
\phi_v(b_1,w_1,b_2,w_2) &= \phi_{v_L}(b_1,w_1,b_2+b_{v_R},w_2+w_{v_R}) + \phi_{v_R}(b_1+b_{v_R},w_1+w_{v_R},b_2,w_2);\\
\intertext{(b) If $i<j$ and $j>k$ then} 
\phi_v(b_1,w_1,b_2,w_2) &= \phi_{v_L}(b_1,w_1,b_2+b_{v_R},w_2+w_{v_R}) + \phi_{v_R}(b_2,w_2,b_1+b_{v_L},w_1+w_{v_L});\\
\intertext{(c) If $j<i$ and (hence) $j<k$ then}
\phi_v(b_1,w_1,b_2,w_2) &= \phi_{v_L}(b_1+b_{v_R},w_1+w_{v_R},b_1,w_1) + \phi_{v_R}(b_1+b_{v_R},w_1+w_{v_R},b_2,w_2). \label{eq:Qvend}
\end{align}
\end{enumerate}

An illustration for several of these cases can be found in Figure \ref{cases} below.
\begin{figure}[htbp]
\begin{center}
\includegraphics[width=0.7\linewidth]{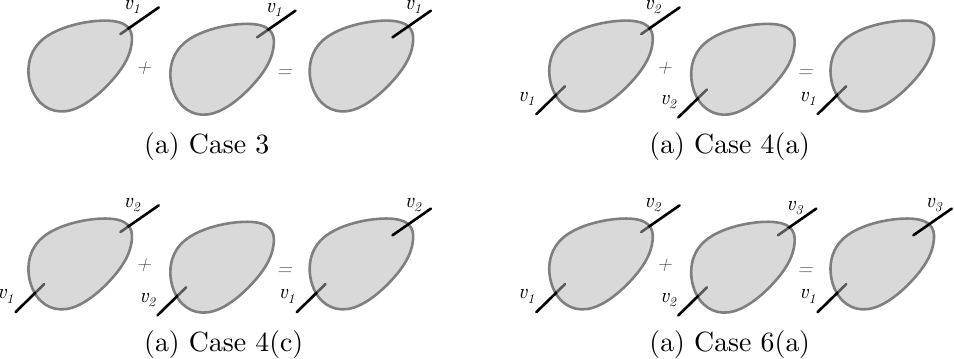}
\caption{Cartoons of segment merging for several cases discussed in the main text.}
\label{cases}
\end{center}
\end{figure}

\begin{lem}
Suppose that $b_v,w_v$ and $\phi_v$ have been computed as above for all nodes of $T_D$ except the root. Let $v_L$ and $v_R$ be the children of the root of $T_D$. Then 
\[\sum_{f \in E(T_2)} \hchi(c,f) y_f = \phi_{V_L}(b_{v_R},w_{v_R}) + \phi_{V_R}(b_{v_L},w_{v_L}).\]
\end{lem}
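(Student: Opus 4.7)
The plan is to establish by induction on $T_D$, working from leaves to root, a semantic invariant for $\phi_v$ from which the lemma follows as an immediate corollary. The invariant I would claim: $b_v$ and $w_v$ count the black and white taxa at leaves of $T_2$ lying strictly inside $Q_v$; and $\phi_v$ computes the contribution $\sum_{f\in Q_v} y_f \hchi(c,f)$ under the fiction that the unknown parts of $T_2$ at each boundary vertex carry prescribed black/white multiplicities. Concretely, if $Q_v$ has degree one with $\partial Q_v=\{u\}$, then $\phi_v(b,w)$ equals $\sum_{f\in Q_v} y_f \hchi(c,f)$ computed as though the component of $T_2 \setminus Q_v$ attached at $u$ contained exactly $b$ black and $w$ white taxa in place of its actual leaves; and if $Q_v$ has degree two with $\partial Q_v=\{v_i,v_j\}$, $i<j$, then $\phi_v(b_1,w_1,b_2,w_2)$ has the analogous meaning with $(b_1,w_1)$ virtual taxa beyond $v_i$ and $(b_2,w_2)$ beyond $v_j$. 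Given this invariant the lemma is immediate at the root: $Q_{v_L}$ and $Q_{v_R}$ partition $E(T_2)$, both have degree one, and they share their unique boundary vertex; hence the exterior of $Q_{v_L}$ from that shared boundary is precisely $Q_{v_R}$, contributing $(b_{v_R},w_{v_R})$ actual leaves, and symmetrically, which gives $\phi_{v_L}(b_{v_R},w_{v_R}) + \phi_{v_R}(b_{v_L},w_{v_L}) = \sum_{f\in E(T_2)} y_f \hchi(c,f)$.

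For the base cases (leaves of $T_D$, where $Q_v=\{f\}$), I would verify both initialisations directly. In case (1), removing $f$ isolates its leaf endpoint $u$, so $(b_v,w_v)\in\{(1,0),(0,1)\}$ correctly records its colour; the two sides of $f$ then have populations $(b_v,w_v)$ and $(b,w)$, so $\hchi(c,f) = b w_v + w b_v$, matching the definition of $\phi_v$. Case (2) is analogous, with the two sides of $f$ populated by $(b_1,w_1)$ and $(b_2,w_2)$, giving $\hchi = b_1 w_2 + b_2 w_1$. At every internal node the updates $b_v = b_{v_L}+b_{v_R}$ and $w_v = w_{v_L}+w_{v_R}$ are automatic since $Q_{v_L}$ and $Q_{v_R}$ partition $Q_v$, and hence also partition its interior leaves.

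For the inductive step, I would verify the merging identities (\ref{eq:Qv1})--(\ref{eq:Qvend}) individually. Using the additivity $\sum_{f\in Q_v} y_f\hchi(c,f) = \sum_{f\in Q_{v_L}} y_f\hchi(c,f) + \sum_{f\in Q_{v_R}} y_f\hchi(c,f)$ together with the inductive hypothesis applied to $v_L$ and $v_R$, the check in each case reduces to verifying that the arguments supplied to $\phi_{v_L}$ and $\phi_{v_R}$ describe the correct exterior taxa counts at each child's boundary vertices. The guiding principle is that the interior leaves of $Q_{v_R}$ form part of the exterior of $Q_{v_L}$ as seen from the boundary vertex the two segments share, and so $(b_{v_R},w_{v_R})$ must be added into whichever argument slot of $\phi_{v_L}$ corresponds to that shared vertex; the remaining boundary vertices of a child, which remain boundary vertices of $Q_v$, inherit their exterior counts unchanged from the parent's arguments. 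The main obstacle, and really the only nontrivial aspect of the proof, is the combinatorial bookkeeping across the geometrically distinct merging configurations in cases (3)--(6), including the index-ordering sub-cases $i<j$ versus $i>j$ that merely align argument slots with the fixed lexicographic convention for degree-two segments. Once the underlying geometry is drawn for each case, the verification becomes a routine substitution.
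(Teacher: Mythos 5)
Your proposal is correct and follows essentially the same route as the paper: an induction from the leaves of $T_D$ to the root establishing that $b_v,w_v$ count the interior black/white taxa of $Q_v$ and that $\phi_v$, evaluated at the exterior multiplicities associated with each boundary vertex, returns $\sum_{f\in Q_v}\hchi(c,f)y_f$, with the lemma then following at the root because $Q_{v_L}$ and $Q_{v_R}$ partition $E(T_2)$ and each is the other's exterior. Your phrasing of the invariant in terms of arbitrary ``virtual'' boundary multiplicities is, if anything, a slightly cleaner statement of what the paper's claims (C2)--(C3) assert, and your level of detail on the case-by-case verification of the merge formulas matches the paper's.
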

\begin{proof}
For any node $v$ of $T_D$  we let $L_v$ denote the set of leaves of $T_2$ not incident with  an edge of $Q_v$.  If $Q_v$ has degree two and boundary $\{v_i,v_j\}$, $i<j$, then we let $L^{(1)}_v$ be the leaves in $L_v$ which are closest to $v_i$ and $L^{(2)}_v$ the leaves in $L_v$ which are closest to $v_j$. Let $\tc$ be any colouring of the leaves of $T_2$, possibly distinct from $c$. Let $B$ and $W$ be the sets of leaves that $\tc$ colours black and white respectively. 

We will establish the following claims for all nodes $v$ in $T_D$, using induction on the height of the node.
\begin{enumerate}
\item[(C1)] $b_v$ and $w_v$ are the  number of leaves incident with edges in $Q_v$ which are coloured black and white by $c$ (and hence by $\tc$).
\item[(C2)] If $Q_v$ has degree one, $b = |B \cap L_v|$ and $w = |W \cap L_v|$ then
\[ \sum_{f \in Q_v} \hchi(\tc,f) y_f = \phi_v( b,w ).\]
\item[(C3)] If $Q_v$ has degree two, $b_1 = |B \cap L^{(1)}_v|$, $w_1 = |W \cap L^{(1)}_v|$, $b_2 = |B \cap L^{(2)}_v|$, and $w_2 = |B \cap L^{(2)}_v|$ then
\[ \sum_{f \in Q_v} \hchi(\tc,f) y_f = \phi_v( b_1,w_1,b_2,w_2 ).\]
\end{enumerate}

We start by considering  any leaf $v$ of $T_D$. In this case,  $Q_v$ contains a single edge $f$. If $f$ is an edge incident to a leaf coloured white then $b_v = 0$, $w_v = 1$ as required, and 
$\hchi(\tc,f)$ equals the number of leaves coloured black by $\tc$, so
\[  \hchi(\tc,f) y_f = |B \cap L_v| y_f = (b w_v + w b_v)y_f = \phi_v(b,w).\]
The same holds if the leaf is coloured black.\\
If the edge $f$ is internal then $b_v=w_v=0$, and $\hchi(\tc,f)$ is equal to the number of paths crossing $f$ connecting leaves with different colours, or 
\[ |B \cap L_v^{(1)}| |W \cap L_v^{(2)}| +  |W \cap L_v^{(1)}| |B \cap L_v^{(2)}| =
b_1 w_2 + w_1 b_2,\]
so $\hchi(\tc,f) y_f = \phi_v(b_1,w_1,b_2,w_2)$.

Now consider the case when $v$ is an internal node of $T_D$, other than the root.  Let $v_L$ and $v_R$ be the two children of $v$. 
Note that $Q_v$ is the disjoint union of $Q_{v_L}$ and $Q_{v_R}$, so $b_v = b_{v_L}+b_{v_R}$ and $w_v = w_{v_L}+w_{v_R}$, proving (C1).  \\
Furthermore, we have
\[\sum_{f \in Q_v} \hchi(\tc,f)  = \sum_{f \in Q_{v_L}} \hchi(\tc,f) + \sum_{f \in Q_{v_R}} \hchi(\tc,f).\]
If $Q_{v_L}$ has degree one then, by the induction hypothesis,
\[\sum_{f \in Q_{v_L}} \hchi(\tc,f)  = \phi_{v_L}(b',w')\]
where $b'$ and $w'$ are the numbers of leaves coloured black and white that are not incident with edges in $Q_{v_L}$. Similarly, if $Q_{v_L}$ has degree two then, by the induction hypothesis, 
\[\sum_{f \in Q_{v_L}} \hchi(\tc,f)  = \phi_{v_L}(b_1',w_1',b_2',w_2')\]
where $b_1'$ and $w_1'$ are the numbers of leaves coloured black and white that are not incident with edges in $Q_{v_L}$ and are closer to the boundary vertex of $Q_{v_L}$ with the smallest index, while 
$b_2'$ and $w_2'$ are the numbers of leaves coloured black and white that are not incident with edges in $Q_{v_L}$ and are closer to the boundary vertex of $Q_{v_L}$ with the largest index. The symmetric result holds for $Q_{v_R}$. \\
The different cases in Eq. \eqref{eq:Qv1} to Eq. \eqref{eq:Qvend} now correspond to the different counts for $b',w'$ or for $b_1',w_1',b_2',w_2'$ depending on whether $Q_{v_L}$ and $Q_{v_R}$ have degree one or two, and whether the boundary vertices in common had the highest or lower index for each segment. 

Now suppose that $v_L$ and $v_R$ are the children of the root of $T_B$. Then $\partial (Q_{v_L} \cup Q_{v_R}) = \emptyset$ so $Q_{v_L}$ and $Q_{v_R}$ must both have degree one. We have that $E(T_2)$ is the disjoint union of $Q_{v_L}$ and $Q_{v_R}$. Any leaf not incident to a leaf in $Q_{v_L}$ is incident to a leaf in $Q_{v_R}$ and vice versa. Hence
as required.
 \end{proof}

Evaluating Eq. \eqref{eq:Qv1} to Eq. \eqref{eq:Qvend} takes constant time and space per each node of $T_D$, since we manipulate and store a constant number of polynomials with at most four variables and total degree at most two. Thus, 
evaluating Eq. \eqref{eq:Qv1} to Eq. \eqref{eq:Qvend}  takes $O(n)$ time and space for each colouring, and since we want to sum this quantity over all colourings $c_e$ from edges $e \in E(T_1)$ a naive implementation would still take $O(n^2)$ time. The key to improving this bound is in the use of efficient updates. 

\begin{lem}
Suppose that we have computed $b_v$, $w_v$ and the functions $\phi_v$ for all $v \in T_D$, using a leaf colouring $c$. Let $\tc$ be a colouring which differs from $c$ at  $k$ leaves. Then we can update the values $b_v$, $w_v$ and the functions $\phi_v$ in $O(k + k \log(n/k)$ time. 
\end{lem}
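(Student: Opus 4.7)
The plan is to observe that, under the new colouring $\tc$, only nodes on the union of ancestor paths in $T_D$ from the $k$ affected leaves to the root can possibly have their stored $b_v$, $w_v$ or $\phi_v$ change. At every other node $v$, the restriction of the colouring to $Q_v$ is identical under $c$ and $\tc$, so by the inductive characterisation in the previous lemma the existing values remain correct. An update therefore reduces to three steps: (i) locate the $k$ leaves of $T_D$ whose corresponding pendant edges in $T_2$ are incident with re-coloured taxa; (ii) recompute their leaf-level $b_v$, $w_v$ and $\phi_v$ using the two leaf formulas; and (iii) recompute $b_v$, $w_v$ and $\phi_v$ at every proper ancestor of these leaves by a single post-order sweep using the recurrences in Eqs.~\eqref{eq:Qv1}--\eqref{eq:Qvend}.

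For the implementation I would keep a pointer from each taxon to the corresponding leaf of $T_D$, so that step (i) takes $O(k)$ time. The leaf recomputations in (ii) are clearly $O(1)$ each. For (iii), I would build the minimal Steiner subtree $S\subseteq T_D$ joining the $k$ affected leaves to the root by walking upwards from each affected leaf in turn, marking nodes as visited and stopping at the first already-marked ancestor. This visits each node of $S$ exactly once and can be arranged to enumerate $S$ bottom-up at essentially no extra cost. Each internal-node update manipulates a constant number of polynomials of constant degree in at most four variables, so the work per node is $O(1)$ and the total cost of (iii) is $O(|S|)$.

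The main obstacle is to show $|S|=O(k+k\log(n/k))$, and this is where the local balance of $T_D$ proved in the Corollary is essential. Decompose $S$ into its branching skeleton $S_{\text{br}}$ (the $k$ affected leaves together with every node of $S$ having two children in $S$) and the maximal linear runs of $S$, i.e.\ maximal paths of nodes of $S$ each having a single child in $S$. Since $S_{\text{br}}$ forms a binary tree with $k$ leaves we have $|S_{\text{br}}|\leq 2k-1$, and consequently there are at most $2k-1$ linear runs. The length of each run is bounded by the height of the $T_D$-subtree rooted at the child of its top endpoint that enters the run; local balance bounds that height by $O(\log m_i)$, where $m_i$ denotes the number of $T_D$-leaves in that subtree. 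The subtrees associated with distinct linear runs are disjoint, so $\sum_i m_i \leq 2n-3$, and Jensen's inequality gives $\sum_i \log m_i = O(k \log(n/k))$ in the regime $k\leq n/2$. The regime $k>n/2$ is handled trivially, since $T_D$ has only $O(n)=O(k)$ nodes in total. Combining the costs of (i)--(iii) with this bound yields the claimed $O(k+k\log(n/k))$ update time.
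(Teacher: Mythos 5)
Your identification of the affected nodes (exactly those $v$ with $Q_v$ containing a pendant edge of a re-coloured taxon, i.e.\ the union of the $k$ leaf-to-root paths in $T_D$), and your $O(1)$-per-node bottom-up recomputation, match the paper; the paper then simply invokes Lemma~2 of Brodal et al.\ to bound the size of this path union by $O(k+k\log(n/k))$, whereas you attempt to prove that bound from scratch. Your direct proof, however, contains a genuine gap: the claim that \emph{the subtrees associated with distinct linear runs are disjoint} is false. The subtree you associate with a run is the $T_D$-subtree rooted at the run's topmost node (this is forced, since it is the only subtree whose height bounds the run's length), and that subtree contains the run's bottom endpoint and everything below it. So whenever a run terminates at a branching node of the Steiner tree $S$ --- which happens for every branching node that is not immediately adjacent to the one above it --- that run's subtree strictly contains the subtrees of all runs hanging below that branching node. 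For a concrete failure, take $k-1$ branching nodes arranged along a single root-to-leaf path of $T_D$ with nontrivial runs between consecutive ones: the associated subtrees form a nested chain, and $\sum_i m_i$ can be $\omega(n)$ (indeed as large as $\Theta(n\log n)$ in a balanced $T_D$), so the inequality $\sum_i m_i\le 2n-3$ on which your Jensen step rests does not hold.

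The conclusion is still true, but it needs a different accounting. One repair: charge each run not $O(\log m_i)$ but $O(1+\log(m_i/m_i'))$, where $m_i'$ is the leaf count of the subtree rooted at the run's \emph{bottom} endpoint; this follows from local balance together with the fact that a binary subtree on $m'$ leaves has height at least $\log_2 m'$. These quantities telescope along $S$, and the non-telescoping residue is $\sum_b\log(\min(|c_1(b)|,|c_2(b)|))$ over branching nodes $b$ of $S$, which a standard convexity/charging argument bounds by $O(k\log(n/k))$. Alternatively, split $S$ into nodes with exactly one marked descendant (these form $k$ descending paths whose top subtrees each contain exactly one marked leaf and hence really are pairwise disjoint, so your Jensen step applies there) and nodes with at least two marked descendants (which require a separate argument). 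Either route essentially reproduces the proof of Brodal et al.'s Lemma~2; as written, your disjointness shortcut does not go through.
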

\begin{proof}
Let $F'$ be the set of edges of $T_2$ which are incident to a leaf for which $c$ and $\tc$ have a different colour, so $|F'| = k$.  The only nodes $v$ in $T_D$ which need to updated are those with $f \in Q_v$ for some $f \in F'$. This is a union of the paths from $k$ leaves of $T_D$ to the root of $T_D$, and   so by Lemma 2  of \cite{Brodal04} , it has size $O(k+k \cdot log(\frac{n}{k}))$.
 \end{proof}

The final step is to show that we can navigate the edges in $E(T_1)$ so that the total number of changes in the colourings is bounded appropriately.  Suppose that $T_1$ is rooted at the leaf $\rho$ (the same as $T_2$). For each internal node $u$ in $T_1$ we let Small($u$) denote the child of $u$ with the smallest number of leaf descendants and let Large($u$) denote the child with the largest number of leaf descendants, breaking ties arbitrarily. 

The following recursive procedure returns the sum of 
\[\sum_{f \in E(T_2) } x_{e} y_{f} \hchi(c_e,f)  \]
over all edges $e \in E(T_1)$. Initially we let $e$ be the edge incident with the root $\rho$. Let $c$ be the colouring where  $\rho$  is black and all other leaves white. We initialise $T_D$ and fill out the values $b_v$, $w_v$ and $\phi_v$ for all nodes $v$ of $T_D$ using the colouring $c$. We then call {\sc Sum}($u$) where $u$ is the unique internal node adjacent to $\rho$.

\begin{figure}[htp]
\begin{algorithmic}
\Procedure{\sc Sum}{$u$}
\small{
\State Let $e$ be the edge connecting $u$ to its parent (in $T_1$).
\State $x \gets \displaystyle\sum_{f \in E(T_2) } x_{e} y_{f} \hchi(c,f)$, computed using $T_D$.
\If{ $u$ is a leaf} 
\State Color $u$  black and update $T_D$
\State return $x$
\Else 
\State Color the leaves in the subtree of $T_1$ rooted at Small($u$) in black and update $T_D$
\State $y \gets$ {\sc Sum}(Large($u$))
\State Color the leaves in the subtree of $T_1$ rooted at Small($u$) in white and update $T_D$
\State $z \gets$ {\sc Sum}(Small($u$))
\State return $x$ + $y$ + $z$
\EndIf
}
\EndProcedure
\end{algorithmic}
\begin{center}
\caption{Recursive algorithm {\sc Sum}}
\end{center}
\end{figure}

We see that the algorithm makes a pre-order traversal of $T_1$, evaluating the sum 
\[\sum_{f \in E(T_2) } x_{e} y_{f} \hchi(c_e,f)\]
for each edge $e$ and accumulating the total. Thus by Lemma~\ref{lem:InnerIsSum}, the algorithm returns $\sum_{ij} p_{ij} q_{ij}$.

The running time is dominated by the time required to update $T_D$. For each leaf, the update is made after only one leaf changes colour, so this takes $O(n \log n)$ summed over all leaves. For every other node $u$ in the tree, the number of nodes of $T_D$ to update is $O(k + k \log(n/k))$ where $k$ is the number of leaves in the subtree rooted at Small($u$). 

\begin{lem}
Let $T$ be a rooted binary tree with $n$ leaves and for each internal node $u$ of $T$ let $k_u$ denote the number of leaves in the smallest subtree rooted at a child of $u$. Then 
\[\sum_{\mbox{\footnotesize $u$ internal}} k_u \log(n/k_u) \leq n \log n.\]
\end{lem}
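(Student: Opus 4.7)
My plan is to prove this by a double-counting argument that rewrites the left-hand side as a sum of per-leaf contributions. Since $k_u$ counts exactly those leaves $\ell$ lying in the smaller child-subtree $\mathrm{Small}(u)$, swapping the order of summation gives
\[
\sum_{u \text{ internal}} k_u \log(n/k_u) \;=\; \sum_{\ell} \sum_{u \in A(\ell)} \log(n/k_u),
\]
where $A(\ell)$ denotes the set of strict ancestors of $\ell$ at which $\ell$ falls in the smaller subtree.

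The crucial structural fact, which I would establish first, is a doubling property. Enumerate $A(\ell) = \{u_1, u_2, \ldots, u_m\}$ from the deepest ancestor upward. Then the subtree of $T$ rooted at the child of $u_j$ containing $\ell$ has at least $2^{j-1}$ leaves: at each $u_j$ the side containing $\ell$ is the smaller one, so the subtree at $u_j$ has at least twice the leaves of the child-subtree containing $\ell$, and between consecutive ancestors in $A(\ell)$ the subtree size is non-decreasing. Consequently $k_{u_j} \geq 2^{j-1}$ and $m \leq \lfloor \log_2 n \rfloor$.

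Substituting these estimates, the contribution of each leaf is dominated by the arithmetic series $\sum_{j=1}^{m}(\log_2 n - j + 1)$. The main obstacle is aggregating across leaves to obtain the stated tight bound $n \log n$: a straightforward per-leaf charge only delivers $O(n \log^2 n)$, with near-equality attained by a perfectly balanced tree in which every leaf simultaneously achieves the worst-case ancestor pattern. To close this gap I would recast the argument as a direct structural induction on $T$: split at the root into subtrees $T_a$ and $T_b$ of sizes $a \leq b$ with $a + b = n$, apply the identity $\log(n/k_u) = \log(n/n_{T'}) + \log(n_{T'}/k_u)$ to push the recursion into each subtree $T'$, and couple this with the classical companion bound $\sum_u k_u \leq n \log n$ (the standard smaller-half lemma) to cancel the cross-terms $K(T_a)\log(n/a) + K(T_b)\log(n/b)$ that the expansion produces. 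Verifying that the resulting recurrence closes exactly at $n \log n$ is the critical arithmetic step.
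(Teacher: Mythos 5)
Your double-counting identity and the doubling property $k_{u_j}\geq 2^{j-1}$ are both correct, but the plan cannot be completed, and the reason is already contained in your own remark about the balanced tree. Take $T$ to be the complete binary tree with $n=16$ leaves; every internal node at depth $j$ has $k_u=2^{3-j}$, so
\[
\sum_{u}k_u\log_2(n/k_u)\;=\;1\cdot(8\cdot 1)+2\cdot(4\cdot 2)+4\cdot(2\cdot 3)+8\cdot(1\cdot 4)\;=\;80\;>\;64\;=\;16\log_2 16 .
\]
Since both sides carry one factor of $\log$, changing the base rescales them identically, so no choice of base rescues the inequality: the lemma is false as stated. Asymptotically the balanced tree gives $\sum_u k_u\log(n/k_u)=\tfrac{n}{4}\log_2 n(\log_2 n+1)=\Theta(n\log^2 n)$, so the quantity is not even $O(n\log n)$. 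Consequently the ``critical arithmetic step'' you defer---checking that the root-splitting recurrence closes at $n\log n$---is exactly where the argument must break: when $a=b=n/2$ the cross-term $\log(n/a)\sum_{u\in T_a}k_u$ is already of order $n\log n$ on its own and it reappears at every level of the recursion, which is precisely how the extra $\log$ factor accumulates. Your instinct that the per-leaf charge of $\Theta(\log^2 n)$ is essentially tight for the balanced tree was the right one; evaluating that example against the right-hand side would have produced a counterexample rather than a proof.

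For comparison, the paper offers no argument at all: its entire proof is the sentence ``This is a restatement of Lemma 7 in \cite{Brodal04}.'' Given the computation above, either that citation misquotes the source or the cited lemma concerns a different sum (for instance the classical smaller-half bound $\sum_u k_u\leq n\log_2 n$, which is true, or a version in which $n$ is replaced by the number of leaves below $u$). Either way, no correct proof of the statement as written can exist, and the $O(n\log n)$ running-time claim that rests on it requires a different justification.
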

\begin{proof}
This is a restatement of Lemma 7 in \cite{Brodal04}.
 \end{proof}

\begin{thm}
Algorithm {\sc Sum} computes $\sum_{ij} p_{ij} q_{ij}$ in $O(n \log n)$ time. Hence the path length distance between $T_1$ and $T_2$ can be computed in $O(n \log n)$ time.
\end{thm}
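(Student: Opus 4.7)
The plan is to prove correctness of {\sc Sum} via a colouring invariant maintained along the pre-order traversal of $T_1$, and then bound the running time by combining the preceding update-cost lemma with heavy-path bounds on the Small subtrees.

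For correctness, I would induct on the recursion depth of {\sc Sum}, maintaining two invariants: \emph{on entry} to {\sc Sum}($u$) the colouring in $T_D$ equals $c_e$, where $e$ is the edge of $T_1$ above $u$ (equivalently, the leaves of the subtree rooted at $u$ are white and all other leaves are black); and \emph{on exit} every leaf of that subtree has been turned black. The base case is the initial invocation: $e$ is the edge at $\rho$ and the starting colouring ($\rho$ black, all else white) matches $c_e$. For the inductive step I would verify that (i) after colouring the Small($u$) subtree black, the colouring equals $c_{e'}$ where $e'$ is the edge above Large($u$), because the leaves outside Subtree(Large($u$)) are exactly those outside Subtree($u$) (black by hypothesis) together with the just-blackened leaves of Small($u$); (ii) by the exit invariant applied to {\sc Sum}(Large($u$)), all leaves of Large($u$) come back black, so flipping Small($u$) to white yields $c_{e''}$ for $e''$ the edge above Small($u$); and (iii) the exit invariant at $u$ then follows by applying the same invariant inductively to {\sc Sum}(Small($u$)). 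Given the entry invariant, the preceding lemma ensures that the value $x$ computed at each call equals $\sum_f x_e y_f \hchi(c_e,f)$, and since the recursion visits each edge of $T_1$ exactly once, summing and applying Lemma~\ref{lem:InnerIsSum} yields $\sum_{ij} p_{ij} q_{ij}$.

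For the running time, each invocation performs $O(1)$ work to read $x$ off the polynomials stored at the children of the root of $T_D$, plus the cost of updating $T_D$ after each colour flip, bounded by the preceding lemma by $O(k + k\log(n/k))$ for $k$ simultaneously flipped leaves. Over the $n$ leaves of $T_1$, the single-leaf flips contribute $O(n\log n)$. Each internal node $u$ triggers two flips of size $k_u$, where $k_u$ is the number of leaves in Small($u$), each of cost $O(k_u + k_u\log(n/k_u))$. The restated lemma from \cite{Brodal04} gives $\sum_u k_u \log(n/k_u) \le n\log n$, and the companion bound $\sum_u k_u = O(n\log n)$ is the standard heavy-path observation that each leaf lies in the Small subtree of only $O(\log n)$ of its ancestors (each Small descent at least halves the subtree size). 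Hence the total cost is $O(n\log n)$; combined with Proposition~\ref{prop:sumSquared} and the expansion~\eqref{eq:DistExpand}, this gives $\Delta(T_1,T_2)$ in $O(n\log n)$ time.

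The step I expect to require the most care is the colouring invariant, specifically the synchronisation between the entry and exit invariants across the Large and Small recursive branches so that the intermediate recolourings produce precisely $c_{e'}$ and $c_{e''}$. Once that bookkeeping is pinned down, the remainder of the argument is a direct application of the preceding lemmas.
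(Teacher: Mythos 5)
Your proposal is correct and follows essentially the same route as the paper: correctness comes from the pre-order traversal of $T_1$ combined with Lemma~\ref{lem:InnerIsSum}, and the running time from the update lemma together with the restated bound $\sum_u k_u \log(n/k_u) \leq n\log n$ from \cite{Brodal04}. The paper leaves both the entry/exit colouring invariant and the companion bound $\sum_u k_u = O(n\log n)$ implicit; your write-up supplies both, and both check out.
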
 

\subsection*{Acknowledgements}
This research was made possible due to travel funds made available from a Marsden grant to DB.

\bibliographystyle{spmpsci}      
\bibliography{pathlength}   

\end{document}